\documentclass[11pt]{article}

\usepackage[margin=1in]{geometry}
\usepackage{amsmath,amssymb,amsthm,mathtools}
\usepackage[T1]{fontenc}
\usepackage{lmodern,microtype}
\usepackage{xurl}
\usepackage[sort]{cite}
\usepackage[colorlinks=true,linkcolor=blue,urlcolor=blue,citecolor=blue]{hyperref}

\newtheorem{theorem}{Theorem}
\newtheorem{lemma}[theorem]{Lemma}

\newcommand{\Ot}{\widetilde O}
\newcommand{\E}{\mathbb E}
\newcommand{\Var}{\operatorname{Var}}

\DeclareMathOperator{\polylog}{polylog}

\title{Maximum Matching Size for Bounded Arboricity Graphs \\
in the Dynamic Graph Stream Model using $\tilde{O}(n^{2/3})$ space}
\author{Andrew McGregor\thanks{This author is supported by the National Science Foundation under grant CCF-2521579.}}
\date{}

\begin{document}
\maketitle

\vspace{-.2in}
\begin{abstract}
The paper presents a one-pass algorithm in the insert-delete graph stream model that  returns a  $(1+\varepsilon)(\alpha+2)$-approximation for the size of the  maximum
matching in a graph of arboricity at most $\alpha$. The algorithm uses
$O(\varepsilon^{-4/3}\alpha^{4/3}n^{2/3} \polylog n)$ space.
For constant $\alpha$ and $\varepsilon$, this improves the best known
previous space bound from $O(n^{4/5}  \polylog n)$ to $O(n^{2/3}  \polylog n)$.
The algorithm is a linear sketch and requires no bounds on the number of deletions or on the
arboricity of intermediate graphs. 
\end{abstract}

\section{Introduction}
Graph matchings have been studied in the data stream model for over
twenty years~\cite{FeigenbaumKMSZ04}, including work in the (adversarial order)
insertion-only
model~\cite{McGregor05,EpsteinLMS11,Kliemann11,EggertKMS12,Zelke12,
GoelKK12,KonradMM12,Kapralov13,AhnG13,CrouchS14,
Assadi:Khanna:Li:17,KaleT17,Konrad18,Tirodkar18,
PazS19,GhaffariW19,GamlathKMS19,AssadiKSY20,
Kapralov21,AssadiN21,AssadiLT21,KonradN21,
AssadiJJST22,FischerMU22,FeldmanS22,
AssadiBKL23,AssadiS23,KonradN24,FerdousPH24,
Assadi25,Mitrovic0SS25,AssadiJX26,abs-2607-14656},
the random-order 
model~\cite{KonradMM12,KapralovKS14,MonemizadehMPS17,Konrad18,
AssadiBBMS19,GamlathKMS19,0001HMRR20,Bernstein20,
AssadiB21,AssadiS23Random,HashemiW24},
and the dynamic model, where edges can be both inserted and
deleted~\cite{ChitnisCHM15,Konrad15,
Assadi:Khanna:Li:Yaroslavtsev:16,ChitnisCEHMMV16,
Assadi:Khanna:Li:17,DarkK20,ChenHK0X21,
AssadiS22,AssadiB0NS26,Khanna0D25}.
Two longstanding questions about graph matchings have recently been resolved under the \emph{semi-streaming} memory restriction, where algorithms use $O(n \cdot \polylog n)$ space and $n$ is the number of nodes. In dynamic streams, $\Theta(\log\log n)$ passes are necessary and sufficient to construct a matching within a constant factor of optimal~\cite{AssadiB0NS26}. In adversarial-order insertion-only streams, no single-pass algorithm can achieve a $(1/2+\varepsilon)$-approximation for any fixed $\varepsilon>0$, even with randomization, establishing the optimality of the greedy algorithm~\cite{abs-2607-14656}.

There has also been considerable work focused on graphs with bounded arboricity \cite{EsfandiariHLMO18,
      CormodeJMM17,0001V18,BuryGMMSVZ19,ChenCEW23,00010ST24,
      Assadi:Khanna:Li:17,Jowhari23,GhorbaniJ24}. Recall that a graph has arboricity at most $\alpha$ if every subgraph on $t\ge1$ vertices has at most $\alpha(t-1)$ edges. Arboricity is a more natural measure of sparsity than the ratio of edges to nodes. For constant arboricity graphs, it is possible to store the entire graph in the semi-streaming setting and yet the matching could still be as large as $n/2$. Hence, rather than finding a large matching in a bounded arboricity graph, the focus in the literature is on estimating the size of the maximum matching $\mu$ in $o(n)$ space.
      
\paragraph{Our Result.} We consider the problem in the dynamic graph stream model, i.e., the stream consists of an arbitrary number of edge insertions and deletions and is fixed independently of the algorithm's random bits. Only the final graph $G$ is required to have arboricity at most $\alpha$. The result of the paper is:

\begin{theorem}\label{thm:main}
For $\varepsilon \in (0,1)$, there is a one-pass randomized algorithm in the insert-delete graph stream model
using $O(\varepsilon^{-4/3}\alpha^{4/3}n^{2/3} \polylog n)$ space that returns $M_{\rm est}$ such that,
with high probability\footnote{Here, and throughout the paper, \emph{with high probability} means probability at least $1-n^{-c}$ for any prescribed constant $c>0$, with constants in the space bound depending on $c$.}, 
\[
 \mu\le M_{\rm est}\le(1+\varepsilon)(\alpha+2)\mu.
\]
\end{theorem}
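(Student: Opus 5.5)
We will build on the standard structural estimator for bounded-arboricity matching (in the spirit of Esfandiari et al.\ and Chitnis et al.). Fix a degree threshold $\tau$ with $\tau \ge \alpha+2$. Call a vertex \emph{heavy} if its degree exceeds $\tau$, let $h$ be the number of heavy vertices, and call an edge \emph{light} if both endpoints are light. Let $E_L$ denote the set of light edges. The quantity we estimate is
\[
 X = h + |E_L|,
\]
or, more precisely, a variant in which light edges are weighted so that each light vertex is charged at most once. The first step is a purely combinatorial lemma, which we prove first:
\[
 \mu \le X \le (\alpha+2)\mu .
\]
The lower bound holds because every edge of a maximum matching either touches a heavy vertex, and distinct matching edges touch distinct heavy vertices, or is light. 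For the upper bound, let $M$ be a maximum matching and $V(M)$ its vertex set, which is a vertex cover of size $2\mu$.
\begin{itemize}
\item Every light edge touches $V(M)$. The light edges touching a light matched vertex number at most $\tau$ per vertex, but this is too weak on its own. Instead we use arboricity: the subgraph on $V(M)\cup H$ has few edges, and heavy vertices outside $V(M)$ have all their neighbors inside $V(M)$.
\item Counting edges between unmatched heavy vertices and $V(M)$ gives $h'\tau \le \alpha(h'+2\mu)$, so $h'\le 2\alpha\mu/(\tau-\alpha)$. This is small once $\tau$ is a large multiple of $\alpha$.
\item To get the exact constant $\alpha+2$, we charge light edges via an orientation of out-degree at most $\alpha$.
\end{itemize}
I expect tuning this to the stated $(\alpha+2)$ factor to require the precise definition of $X$ used in the literature. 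I would take it from the $(\alpha+2)$ bound of prior work, so that estimating $X$ within $1\pm\varepsilon$ suffices. A small upward rescaling then gives $\mu\le M_{\rm est}$.

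Next comes the streaming part: estimate $h$ and $|E_L|$ up to $(1\pm\varepsilon)$ additive error $\varepsilon\mu$ with linear sketches. First we handle the regime where $\mu$ is small. Since $\mu\ge m/(2\alpha n)$-type bounds do not help directly, we instead guess $\mu$ in powers of two. For a guess $\mu \le n^{2/3}$, we subsample vertices or simply recover a maximal-matching-witnessing structure via $\ell_0$-samplers and sparse recovery. A graph with matching number $k$ and arboricity $\alpha$ has a vertex cover of size $2k$, and its light-edge part has $O(k\tau)$ edges, which can be recovered by $k$-sparse recovery in $\tilde O(k\tau)$ space.

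For $\mu$ large, we sample each vertex with probability $p$ and each edge with probability $q$, and estimate three quantities by linear sketches on the sampled sets:
\begin{itemize}
\item the number of heavy vertices, by sketching degree vectors of the sampled vertices with $\ell_0$ or count estimates;
\item the number of light edges, by sparse recovery of sampled edges that both lie at sampled vertices, together with degree checks of their endpoints.
\end{itemize}
The variance is controlled using $X\ge\mu$ and $X=O(\alpha\mu)$.

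Balancing the two regimes is where the exponent comes from. The small-$\mu$ cost is roughly $\mu\tau$. The large-$\mu$ sampling cost is roughly $n\cdot(\text{sampling rate})$, with the rate needing to be about $\alpha/(\varepsilon^2\mu)$ per vertex and with degree tests at threshold $\tau$ costing further polylog factors. An extra $\tau$-dependent error arises from misclassifying vertices near the threshold, which we handle by randomizing $\tau$ in $[\tau_0,2\tau_0]$.

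Optimizing $\mu$ against $n/\mu$ with an $\varepsilon^{-1}\alpha$ dependence suggests the crossover $\mu\approx(\alpha/\varepsilon)^{2/3} n^{1/3}$-ish. This gives total space $(\alpha/\varepsilon)^{4/3}n^{2/3}$.

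The main obstacle is the light-edge estimator in the large-$\mu$ regime. To decide whether a sampled edge is light, we must test both endpoints' degrees against $\tau$ in a dynamic stream, while intermediate graphs may be dense. I would first try, for each sampled vertex, an exact-degree counter (a linear sum), which costs $O(\log n)$ per vertex. I would then sample edges only between pairs of sampled vertices with rate $p^2$, so the edge sparse-recovery budget stays $\tilde O(p^2\alpha n)$. Achieving the $\varepsilon^{-4/3}$ rather than $\varepsilon^{-2}$ dependence will likely need the variance bound to exploit that the light-edge count is at most $O(\alpha\mu)$ with bounded degree $\tau$. I would carry this out via a careful Chebyshev argument followed by median amplification to get high probability.
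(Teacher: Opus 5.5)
Your plan has two genuine gaps.

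\textbf{The approximation lemma fails for your $X$.} If $\tau>\alpha+2$ (and you want $\tau$ to be a large multiple of $\alpha$), the star $K_{1,\tau}$ has arboricity $1$, $\mu=1$, no heavy vertex and $\tau$ light edges. So $X=\tau>(\alpha+2)\mu$. Even with $\tau=\alpha+2$, $K_7$ (arboricity $4$, so $\tau=6$) has all $21$ edges light, while $(\alpha+2)\mu=18$.

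The $(\alpha+2)$ bound you plan to borrow belongs to a different quantity. The paper uses the degree statistic of Bury et al., $Q=\sum_v\min\{d_v,\alpha+1\}-m=m-F$, where $F=\sum_v f_v$ and $f_v=\max\{d_v-\alpha-1,0\}$. It satisfies $\mu\le Q\le(\alpha+2)\mu$ and $Q\ge h_0=|\{v:f_v>0\}|$. Adopting $Q$ removes what you call the main obstacle, since only $m$ and vertex degrees are needed and no edge is ever classified.

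But $Q$ also breaks your variance heuristic. It is a difference of two quantities that are both $\Theta(n)$ for a star on $n$ vertices, where $\mu=1$. So naive vertex sampling with $s$ samples has error of order $n/\sqrt{s}$, not something controlled by $Q=O(\alpha\mu)$. Your rate of about $\alpha/(\varepsilon^2\mu)$ per vertex implicitly assumes per-vertex terms that are nonnegative, bounded, sum to $O(\alpha\mu)$, and are computable at the sampled vertex alone. $X$ fails the last requirement, because a light vertex's contribution depends on its neighbours' degrees. For $Q$, the terms $\min\{d_v,D\}$ sum to $\Theta(n)$.

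\textbf{The space bound does not follow.} The $\Ot(k\tau)$ cost for $\mu\le k$ is unjustified. In a dynamic stream the edge vector is not sparse, since one heavy vertex can carry $\Theta(n)$ edges. Restricting to light edges is not a linear operation, because heaviness is determined only by the final graph. The tool actually available, Chitnis et al., costs $\Ot(k^2)$. As a sanity check, your stated costs $\mu\tau$ and $n\alpha/(\varepsilon^2\mu)$ would balance at $\mu\approx\sqrt n/\varepsilon$ and give $\Ot(\alpha\sqrt n/\varepsilon)$, not the crossover and total you state.

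Now combine $k^2$ for small $\mu$ with your pair-sampling estimator for large $\mu$. That estimator's variance is at least about $|E_L|/p^2$. Already for a perfect matching, where $|E_L|=\mu$, this forces $p\gtrsim1/(\varepsilon\sqrt\mu)$, i.e.\ roughly $n/(\varepsilon\sqrt\mu)$ space. Balancing gives $\mu^{5/2}\approx n$ and space about $n^{4/5}$, which is the bound the paper improves.

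\textbf{The missing idea is a third, intermediate regime.} Set $\eta=\varepsilon/10$ and $K\approx(\alpha^2n\log^2n/\eta^2)^{1/3}$. The paper then runs three estimators:
\begin{enumerate}
\item For $\mu\le K$, it uses Chitnis et al.
\item For $\mu\ge K^2$, it samples about $K^2$ vertices, with error $O(\alpha n/K)\le\eta K^2$.
\item For $K\le\mu\le K^2$, it estimates $F$ by Braverman--Ostrovsky recursive subsampling $\widehat F_i=2\widehat F_{i+1}+C_i$. Here $C_i=\sum_{v\in V_i}(1-2b_{i+1}(v))\widetilde f_{i,v}$ is built from CountSketch degree estimates used as control variates.
\end{enumerate}
In the third estimator, at most $3\mu\le3K^2$ vertices have degree at least $4\alpha$, so the CountSketch tail at level $i$ is at most $4\alpha\sqrt{|V_i|}$. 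Hence each residual $f_v-\widetilde f_{i,v}$ lies in $[0,\tau_i]$ with $\tau_i=O(\alpha\max\{1,\sqrt{|V_i|}/K\})$, and it vanishes unless $f_v>0$. Since $h_0\le Q$, this gives $\E[(\widehat F_0-F)^2]=O(\alpha^2nQ\log^2n/K^2)$. By the choice of $K$ this is a small constant times $\eta^2KQ$, i.e.\ error $\eta\max\{K,Q\}$.

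A capped maximum of the three estimates then yields the theorem. Without the intermediate estimator, your approach stalls at $n^{4/5}$.
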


\paragraph{Previous Work.}
The best previous one-pass space bound for an $O(\alpha)$ approximation in the insert-delete model
was
$\tilde{O}_{\alpha,\varepsilon}(n^{4/5})$~\cite{BuryGMMSVZ19,ChitnisCEHMMV16}. 
In contrast, Assadi et al.~\cite{Assadi:Khanna:Li:17} proved an
$\Omega(\sqrt{n}/r^{5/2})$-bit lower bound for randomized
one-pass algorithms that estimate maximum matching size
within a factor $r$ in dynamic streams, even on graphs
of arboricity $O(r)$.

Various results were known in more relaxed models. Cormode et al.~\cite{CormodeJMM17} gave a one-pass
$O(\alpha)$-approximation using
$\tilde{O}_{\alpha,\varepsilon}(n^{2/3})$ space in the insert-delete model under the
additional assumption that the number of deletions is
$O(\alpha n)$.
In the unrestricted dynamic model, Konrad et al.~\cite{00010ST24}
gave $(\alpha+2)(1+\varepsilon)$-approximation algorithms
using two passes and $\tilde{O}_{\alpha,\varepsilon}(n^{3/5})$
space, or three passes and
$\tilde{O}_{\alpha,\varepsilon}(n^{1/2})$ space.
In the insertion-only model, Cormode et al.~\cite{CormodeJMM17}
gave the first $O(\alpha)$-approximation using polylogarithmic
space for constant $\alpha$.
McGregor and Vorotnikova~\cite{0001V18} improved the approximation
factor to $(\alpha+2)(1+\varepsilon)$ and the space bound to
$O(\varepsilon^{-2}\log^2 n)$ bits. In the adjacency-list model, where edges arrive grouped by endpoint (every edge appears twice), Bury  et al.~\cite{BuryGMMSVZ19} gave a deterministic $(\alpha+2)$-approximation using $O(\log n)$ space and Ghorbani and Jowhari~\cite{GhorbaniJ24,Jowhari23} showed that the approximation factor could be reduced to $(\alpha+1)(1+\varepsilon)$, for integer $\alpha\geq 2$, at the expense of increasing the space to $\tilde{O}_{\alpha,\varepsilon}(\sqrt{n})$.

\paragraph{Notation and Assumptions.} Write $m$ for the final edge count, $d_v$ for the final degree of vertex $v$, and $\mu=\mu(G)$ for the maximum matching size. We consider simple undirected graphs on $V=[n]$.
The algorithm is given $n$ and an integer upper bound
$\alpha\ge1$ on the arboricity of the final graph. We use $\Ot$ to suppress factors polylogarithmic in $n$. We assume $m>0$ throughout. Note that the value of $m$ can be computed exactly in parallel and if $m=0$ then $\mu=0$.

\section{Technical Overview}\label{sec:framework}

\subsection{A Useful Degree Statistic Revisited}
Our starting point is the following degree statistic analyzed by Bury et al.\footnote{They write the  statistic as 
$\sum_v\min\{d_v/2,\alpha+1-d_v/2\}
=m-\sum_v\max(d_v-\alpha-1,0)$ but this equals $Q$.} \cite{BuryGMMSVZ19}.
\begin{equation}\label{eq:twoforms}
 Q=m-\sum_v\max(d_v-D,0)
   =\sum_v\min\{d_v,D\}-m \quad \mbox{ where } \quad D=\alpha+1 \ .
\end{equation}
The following notation will be convenient throughout the paper:
\[
 f_v=\max\{0,d_v-D\},\qquad F=\sum_v f_v
\]
and so $Q$ can be written as $Q=m-F$.

\begin{lemma}[Bury et al.~\cite{BuryGMMSVZ19}, Theorem 9 and its proof]\label{lem:statistic}
Let $G$ be a simple graph of arboricity at most $\alpha$. Then,
\begin{equation}\label{eq:positive-count}
 \mu(G)\le Q\le(\alpha+2)\mu(G)
\end{equation}
and $Q\ge h_0:=|\{v:f_v>0\}|$.
\end{lemma}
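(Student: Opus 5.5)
Three inequalities need proof: $Q\ge h_0$, $Q\ge\mu$, and $Q\le(\alpha+2)\mu$. Throughout, $H=\{v: d_v>D\}$ denotes the heavy vertices, so that $h_0=|H|$.

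\emph{Step 1: $Q\ge h_0$.} This comes from the first form $Q=m-F$ together with arboricity. We have $F=\sum_{v\in H}(d_v-D)=\sum_{v\in H}d_v-D h_0$. Every edge has at most two heavy endpoints, so $\sum_{v\in H}d_v = 2e(H)+e(H,V\setminus H)$. Hence
\[m-F\ge e(H)+e(H,\bar H)+e(\bar H)-2e(H)-e(H,\bar H)+Dh_0 = Dh_0 - e(H)+e(\bar H)\ge (\alpha+1)h_0-\alpha(h_0-1)\ge h_0 .\]
The case $h_0=0$ is trivial because then $Q=m\ge 0$.

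\emph{Step 2: $Q\ge\mu$.} Fix a maximum matching $M$. I use the second form $Q=\sum_v\min\{d_v,D\}-m$. My first idea is to show $m-F\ge |M|$ by charging, but it is cleaner to use a degree-capped subgraph. Orient each vertex's capacity as $\min\{d_v,D\}$; then $Q=\sum_v \min(d_v,D)-m$. Each edge $uv$ contributes $1$ to $m$ and uses one unit of degree at each endpoint. The plan is to pick a subgraph $G'\subseteq G$ with $M\subseteq G'$ and with every degree at most $D$. Concretely, delete from each heavy vertex edges not in $M$ until its degree is $D$. This removes at most $F$ edges, so $m'\ge m-F=Q$; the direction of this inequality does not yet finish the argument and must be combined with the right bound.

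The cleaner route is the following. Since $Q=m-F$, it suffices to show $F\le m-\mu$, i.e.\ we can find $F$ edges outside $M$ that are distinct. Each heavy vertex $v$ has at least $d_v-1\ge f_v+\alpha$ non-$M$ edges. Assigning $f_v$ of them to $v$ may double count an edge between two heavy vertices. I would handle this through arboricity: orient $G[H]$ so that each vertex has in-degree at most $\alpha$. Then let each heavy $v$ claim only its non-$M$ edges that are not in-edges from heavy vertices, i.e.\ edges to light vertices plus out-edges in $H$. The number of such edges is at least $d_v-1-\alpha\ge f_v$, since $D=\alpha+1$. Each edge is then claimed by at most one endpoint, which gives $F\le m-\mu$.

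\emph{Step 3: $Q\le(\alpha+2)\mu$.} Let $C$ be the vertex cover formed by the endpoints of $M$, so $|C|=2\mu$. Every edge touches $C$, so $m\le \sum_{v\in C}d_v - e(C)$. Then
\[Q=m-F\le \sum_{v\in C}\min(d_v,D)+\sum_{v\in C}f_v - F - e(C)\le D\cdot 2\mu - e(C)\le 2(\alpha+1)\mu-\mu=(2\alpha+1)\mu .\]
This is too weak, and it is the main obstacle. To get $(\alpha+2)$, I would instead use $Q=\sum_v\min(d_v,D)-m$ and split by edges: $Q=\sum_{uv\in E}\bigl(\tfrac{\min(d_u,D)}{d_u}+\tfrac{\min(d_v,D)}{d_v}-1\bigr)$. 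Each edge therefore contributes at most $1$, and contributes positively only if it has a light endpoint. For each matched edge I bound the total contribution of edges incident to it: a light endpoint has at most $D$ incident edges. The delicate case is an edge between two unmatched vertices, which cannot exist by maximality; edges from matched to unmatched vertices are the ones to handle. I expect to use the per-vertex bound $\sum_{e\ni v}\mathrm{contribution}\le \min(d_v,D)-d_v/2$ and a K\H{o}nig/Gallai-type argument to reach the bound $(\alpha+2)\mu$, following the proof in Bury et al.
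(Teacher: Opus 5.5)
There is a genuine gap, and it is Step 3, the upper bound $Q\le(\alpha+2)\mu$, which is the substantive half of the lemma. Your Steps 1 and 2 are correct. In Step 2, the orientation argument (each heavy $v$ claims at least $d_v-1-\alpha=f_v$ non-matching edges, and no edge is claimed twice) does give $F\le m-\mu$. A shortcut there: your Step 1 identity $Q=e(\bar H)+Dh_0-e(H)$ already gives $Q\ge e(\bar H)+h_0\ge\mu$, since each matching edge is either light--light or has its own heavy endpoint.

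Step 3, however, is never established. Your vertex-cover computation only reaches $(2\alpha+1)\mu$, and what follows is a plan rather than a proof. One of its claims is also false. An edge joining two heavy vertices of degree $D+1$ contributes $2D/(D+1)-1=(D-1)/(D+1)>0$, so positive contributions are not confined to edges with a light endpoint. The ``K\H{o}nig/Gallai-type argument'' is left entirely unspecified.

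The missing idea is Vizing's theorem, applied to the degree-capped subgraph you built in Step 2 and then set aside. At each heavy vertex, delete edges until its degree is at most $D$. This removes at most $F$ edges, so the resulting $G'\subseteq G$ has maximum degree at most $D$ and $|E(G')|\ge m-F=Q$. Since $G$ is simple, Vizing's theorem partitions $E(G')$ into $D+1$ matchings. Each is a matching of $G$, so each has size at most $\mu$, and therefore
$$Q\le|E(G')|\le(D+1)\mu=(\alpha+2)\mu.$$
The inequality $m'\ge Q$ that you called the wrong direction is exactly the right direction here.

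This half needs no arboricity assumption. A maximal-matching count in $G'$ would only give $(2D-1)\mu=(2\alpha+1)\mu$ again. An edge-colouring argument (or, alternatively, a carefully executed Tutte--Berge decomposition) is what brings the factor down from $2D-1$ to $D+1$.
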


\subsection{Estimation via Recursive Sketching}
The final algorithm will be based on computing three different estimates. The accuracy of these will depend on the size of $\mu$ relative to the following parameter:
\begin{equation}\label{eq:parameters}
 K=\left\lceil
     \left(C_0\eta^{-2}\alpha^2n\log_2^2(2n)\right)^{1/3}
   \right\rceil ,
\end{equation}
where $C_0>0$ is a sufficiently large constant and $\eta=\varepsilon/10$.
The main contribution of this paper is a new way to build two estimates that will be relevant when $\mu$ is relatively large. The following lemma (proved in Section~\ref{sec:sampling}) states the guarantees of these estimates.

\begin{lemma}[Recursive and Terminal Estimators]\label{lem:degree-estimators}
For $K$ as in~\eqref{eq:parameters}, there is a one-pass algorithm
using $\Ot(K^2)$ space that returns
$\widehat Q_{\rm rec}$ and $\widehat Q_{\rm term}$ satisfying, with
high probability,
\begin{align}
 |\widehat Q_{\rm rec}-Q|
 &\le\eta\max\{K,Q\}
 &&\text{if }\mu\le K^2,\label{eq:recaccuracy}\\
 |\widehat Q_{\rm term}-Q|
 &\le\eta K^2
 &&\text{for every }\mu.\label{eq:termaccuracy}
\end{align}
\end{lemma}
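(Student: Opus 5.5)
The plan is to use the exact value of $m$ (maintained as a counter) and reduce both estimators to estimating $F=\sum_v f_v$. Both estimators will combine two linear sketches over the final degree vector $(d_v)_v$, whose $\ell_1$ norm is $2m\le 2\alpha n$. The first is \emph{vertex sampling}: keep a hashed set $S$ of vertices at rate $p$ and maintain exact degree counters for $v\in S$. The second is \emph{heavy-vertex recovery}: a CountSketch/sparse-recovery structure on the degree vector recovers every vertex with $d_v\ge T:=2\alpha n/K^2$ together with a good approximation of $d_v$, using $\Ot(K^2)$ space. Since the graph has arboricity $\alpha$, there are at most $2\alpha n/T=K^2$ such vertices, so they can all be listed and their contributions $f_v$ summed directly, with the additive error from sketching the degrees controlled. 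For the remaining light vertices we use the Horvitz--Thompson estimate $p^{-1}\sum_{v\in S,\,d_v<T} f_v$.

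For $\widehat Q_{\rm term}$ the variance of the light part is at most $p^{-1}\sum_{d_v<T} f_v^2\le p^{-1}T\,F\le p^{-1}T\alpha n$. A single level with $p=K^2/n$ does not reach error $\eta K^2$ at $K\approx n^{1/3}$. So I would make the scheme recursive. Partition the light vertices into geometric degree classes $[2^j,2^{j+1})$ and sample class $j$ at a rate $p_j$ proportional to $2^j$, capped at $1$. Nestedly hashed vertex subsets at rates $2^{-i}$, each carrying its own heavy-hitter sketch restricted to edges incident to that subset, provide this class-dependent sampling within $\Ot(K^2)$ total space. Class $j$ then contributes variance about $2^{2j}N_j/p_j$, where $N_j\le 2\alpha n/2^j$ is the number of vertices in the class. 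Choosing $p_j\approx 2^j\cdot\text{poly}(\alpha)/K$ bounds each class's variance by $\Ot(\alpha^2 n K)$, within the space budget. Summing over $O(\log n)$ classes and plugging in $K^3\approx C_0\eta^{-2}\alpha^2 n\log^2 n$ gives total variance at most $\eta^2K^4/\polylog$. The choice of $K$ in~\eqref{eq:parameters} is exactly what balances this, and Chebyshev plus a median of $O(\log n)$ copies gives~\eqref{eq:termaccuracy} with high probability.

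For $\widehat Q_{\rm rec}$, in the regime $\mu\le K^2$, I would exploit Lemma~\ref{lem:statistic}: $h_0\le Q\le(\alpha+2)\mu$. The number of vertices contributing to $F$ is therefore at most $(\alpha+2)K^2$, so relative rather than additive error becomes attainable. The catch is that $F$ and $m$ may both be huge while $Q=m-F$ is small (e.g.\ a star), so errors in $F$ must scale with $Q$, not with $m$. I would instead estimate $Q$ edgewise: write $Q=\sum_{v:d_v\le D}d_v-\sum_{v:d_v>D}\bigl(d_v-D\bigr)+(m-\sum_v d_v)+\dots$, regrouped as a sum over vertices of the bounded quantity $\min\{d_v,D\}-d_v/2$. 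The contribution of each high-degree vertex then cancels against its incident edges. Vertices with $f_v>0$ number at most $h_0\le Q$, and each is handled by the heavy-vertex recovery at the appropriate subsampling level. Low-degree vertices contribute bounded terms, at most $D$ each, so sampling them yields variance $O(\alpha Q/p)$, i.e.\ error $\eta\max\{K,Q\}$. The recursion picks the finest level whose sparse-recovery structure succeeds, which is guaranteed once the expected number of surviving heavy-incident edges is $\Ot(K^2)$.

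The main obstacle is this last cancellation. Bounding the error of $\widehat Q_{\rm rec}$ by $\eta\max\{K,Q\}$ rather than by $\eta m$ requires that heavy vertices of huge degree be recovered with \emph{exact} (or $O(1)$-additive) degrees, and that the low-degree side be sampled consistently with them. I would try to achieve this by having the recursive level detect when all heavy vertices at that level are exactly recoverable, which is possible since there are at most $\Ot(K^2)$ of them when $\mu\le K^2$, and to fall back otherwise.
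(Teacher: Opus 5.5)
There is a genuine gap in your construction of $\widehat Q_{\rm rec}$. You correctly note that its error must scale with $Q$ rather than $m$. However, the decomposition you adopt, $Q=\sum_v(\min\{d_v,D\}-d_v/2)$, puts the large cancellation back into the sampled part. The summand equals $D-d_v/2$ when $d_v>D$, so it is not bounded. The low-degree part $\sum_{v:d_v\le D}d_v/2$ can be $\Theta(n)$ while $Q$ is small. Take $K$ disjoint stars with about $n/K$ leaves each, so $\mu=K$ and $Q=KD$. Sampling low-degree vertices at any rate $p=\Ot(K^2/n)$ allowed by the space budget gives standard deviation of order $n/K$, which for large $n$ is far above $\eta\max\{K,Q\}=\eta KD$. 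So the claimed variance $O(\alpha Q/p)$ is false. Your remedy (exact or $O(1)$-additive heavy degrees, with detect-and-fall-back) comes with no mechanism, and it is not what is needed. Your heavy-vertex recovery is also controlled only through $2m\le 2\alpha n$, which gives per-vertex error of order $\alpha n/K^2$ at level $0$. Even inside a correct recursion, this would only support $K\approx n^{2/5}$, i.e.\ about $n^{4/5}$ space. You never use the arboricity fact that at most $3\mu$ vertices have degree at least $4\alpha$.

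The missing ingredients are as follows.
\begin{itemize}
\item Keep $m$ exact and estimate only $F$, which is supported on $h_0\le Q$ vertices.
\item When $\mu\le K^2$, the $3\mu$ bound makes the $3K^2$-tail of the degree vector masked to $V_i$ have $\ell_2$ norm at most $4\alpha\sqrt{|V_i|}$. Hence CountSketch with $w=3K^2$ estimates every degree in $V_i$ to within $\tau_i/2=O(\alpha\max\{1,\sqrt{|V_i|}/K\})$.
\item Subtracting $\tau_i/2$ before taking the positive part gives $0\le f_v-\widetilde f_{i,v}\le\tau_i$, with $\widetilde f_{i,v}=0$ whenever $f_v=0$.
\item Use these $\widetilde f_{i,v}$ as control variates in the Braverman--Ostrovsky recursion $\widehat F_i=2\widehat F_{i+1}+\sum_{v\in V_i}a_i(v)\widetilde f_{i,v}$, with $a_i(v)=1-2b_{i+1}(v)$. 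The one-level error is $\sum_{v\in V_i}a_i(v)(\widetilde f_{i,v}-f_v)$. Because the signs are independent of the sketch, this has mean zero and conditional second moment at most $\tau_i^2h_i$, plus a negligible failure term. So approximate degrees suffice and exact ones are never needed.
\end{itemize}
Summing $4^i\E[\tau_i^2h_i]$ over the $L$ levels gives $\E[(\widehat F_0-F_0)^2]=O(\alpha^2nQL^2/K^2)$. Chebyshev with $\max\{K,Q\}^2\ge KQ$ then yields the required bound.

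Separately, your layered scheme for $\widehat Q_{\rm term}$ is unnecessary, and its variance count ignores the sketching error in the recovered degrees. A single level fails for you only because you sample the unbounded $f_v$. With $Q=\sum_v\min\{d_v,D\}-m$, whose summands are at most $D$, one vertex sample at rate $p_L\approx K^2/n$ has variance at most $8\alpha^2n^2/K^2\le\eta^2K^4/10$.
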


For the sake of intuition, note that 
\begin{eqnarray*}
K \leq \mu\leq K^2 & \Rightarrow &  \widehat Q_{\rm rec}=(1\pm \eta)Q \\
\mu\geq K^2 & \Rightarrow &  \widehat Q_{\rm term}=(1\pm \eta)Q 
\end{eqnarray*}
In the next subsection, we will discuss estimation when $\mu\leq K$, along with how to use these estimates without knowing which range we are in.

\paragraph{Recursive Sketching.} Our approach is based on the 
recursive-sketching framework of Braverman and Ostrovsky~\cite{BravermanO13a}. The basic idea is to consider a sequence of random subsets of nodes  $V=V_0\supseteq V_1\supseteq V_2\supseteq \ldots \supseteq V_{L}$ where each set is roughly half the size of the previous set and $V_{L}$ has $O(K^2)$ nodes in expectation. For each level, define $F_i=\sum_{v\in V_i} f_v$. The algorithm computes $F_L$ exactly and uses the same stored
degrees to form the ``terminal estimate''; the median of independent
copies yields $\widehat Q_{\rm term}$.  For $i<L$,  $F_i$
is estimated based on an estimate of $F_{i+1}$ and an additional ``correction'' term based on the high-degree vertices in $V_i$; these high-degree terms will be identified via CountSketch~\cite{CharikarCF04}.  The goal of these correction terms is to reduce the
error accumulated in estimating $F=F_0$. Subtracting this  estimate from $m$ is the basis for the $\widehat Q_{\rm rec}$ estimate.

\subsection{Combining Estimates (Proof of Theorem~\ref{thm:main})}\label{sec:ranges}

In parallel with the algorithm referenced in Lemma \ref{lem:degree-estimators}, we run an algorithm by Chitnis et al.~\cite{ChitnisCEHMMV16} that has the following guarantees. (See also  Bury et al.~\cite{BuryGMMSVZ19} for a similar algorithm.)

\begin{lemma}[Chitnis et al.~\cite{ChitnisCEHMMV16}]\label{lem:small-matching}
For $K$ as in~\eqref{eq:parameters}, there is a one-pass algorithm
using $\Ot(K^2)$ space that outputs a matching of $G$. Writing
$\widehat Q_{\rm small}$ for its size, with high probability
\begin{equation}\label{eq:smallaccuracy}
 \widehat Q_{\rm small}=\mu \qquad\text{if }\mu\le K.
\end{equation}
\end{lemma}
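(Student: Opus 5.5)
The proof goes through the standard kernel for small matchings combined with $\ell_0$-sampling, and it follows the argument of Chitnis et al. Let $k=K$. The plan is to sample a subgraph whose maximum matching provably equals $\mu$ whenever $\mu\le k$, and to recover that subgraph from a linear sketch.

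\textbf{Construction.} Pick a random hash function $h:V\to[4k^2]$ from a pairwise-independent (or $O(\log n)$-wise independent) family, and partition the vertices into $4k^2$ buckets. For each unordered pair of buckets $\{a,b\}$ (including $a=b$), maintain an $\ell_0$-sampler on the edges with one endpoint in bucket $a$ and the other in bucket $b$. Since there are $O(k^4)$ pairs, this is too large as stated. The fix, following Chitnis et al., is to sample per vertex-bucket instead of per pair. For every bucket $a$ and every bucket $b$ we would still have $k^4$ samplers. So we use their actual scheme: hash vertices into $O(k)$ buckets, and for each pair of buckets keep $O(\polylog n)$ $\ell_0$-samplers, repeated $O(k\polylog n)$ times or with a finer subsampling. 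This gives $\Ot(k^2)$ space. At the end, form the graph $H$ of all sampled edges and output a maximum matching of $H$. Every recovered edge is a genuine edge of the final $G$, because the samplers are linear sketches and return correct edges with high probability. Hence $\widehat Q_{\rm small}\le\mu$ always, with high probability.

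\textbf{Correctness when $\mu\le k$.} Let $M^*$ be a maximum matching, with $|M^*|=\mu\le k$, and let $S$ be its $2\mu\le 2k$ endpoints. With $O(k)$ buckets and a suitable constant, with constant probability every vertex of $S$ lands in a distinct bucket. Repeating $O(\log n)$ times independently boosts this to high probability. Fix a repetition in which $S$ is separated. For each matched edge $(u,v)\in M^*$, the pair of buckets $(h(u),h(v))$ contains at least one edge.

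The key claim is that the sampled graph $H$ has a matching of size $\mu$. The argument collapses each bucket to a single vertex. The bucket multigraph $B$ has a matching of size at least $\mu$, namely the image of $M^*$, and its maximum matching is at most $\mu$, since any matching in $B$ lifts to a matching in $G$ by choosing one sampled edge per bucket pair (distinct buckets give disjoint vertices). So if for every adjacent bucket pair we recover any single edge, then $H$'s bucket quotient equals $B$, and a matching of size $\mu$ in $B$ lifts to a matching of size $\mu$ in $H$. Therefore $\mu(H)\ge\mu$. With $O(k)$ buckets there are $O(k^2)$ bucket pairs, each holding one $\ell_0$-sampler of $\polylog n$ size. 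This gives total space $\Ot(k^2)=\Ot(K^2)$, which matches the lemma.

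\textbf{Main obstacle.} The delicate point is making sure the lifted matching is valid. This requires that vertices within the same bucket, which may include non-$S$ vertices, do not break the lifting. Lifting a bucket matching chooses one edge per matched bucket pair, and distinct buckets are vertex-disjoint, so the result is always a matching in $G$. The only role of separating $S$ is to guarantee that $\mu(B)\ge\mu$. Hence the argument holds without further conditions.

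Finally, a union bound over all $O(k^2)$ samplers' failure probabilities and over the repetitions, taking the maximum over repetitions, gives $\widehat Q_{\rm small}=\mu$ with high probability. The inequality $\widehat Q_{\rm small}\le\mu$ holds in every repetition, so taking the maximum is safe.
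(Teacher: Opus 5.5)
The paper gives no proof of this lemma; it imports it from Chitnis et al. Your reconstruction therefore has to stand on its own, and it breaks at the separation step. You claim that with $O(k)$ buckets all $2\mu\le 2k$ vertices of $S$ land in distinct buckets with constant probability. The birthday paradox rules this out. With $b=ck$ buckets and $\mu=k$, even a fully random hash gives $\prod_{j=0}^{2k-1}(1-j/b)\le e^{-k(2k-1)/b}=e^{-\Omega(k)}$. With pairwise independence, the expected number of colliding pairs in $S$ is already $k(2k-1)/b=\Theta(k)$.

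No $O(\log n)$ repetitions can boost an $e^{-\Omega(k)}$ success probability. Making it constant needs $b=\Omega(k^2)$ buckets, i.e.\ $\Omega(k^4)$ bucket pairs and $\Ot(k^4)$ space. That is exactly the blow-up you noticed in your construction paragraph and then stepped around. Your fallback of $O(k\polylog n)$ repetitions would cost $\Ot(k^3)$ anyway.

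The lifting step is correct, but it only yields $\mu(H)\ge\mu(B)$, and with $O(k)$ buckets $\mu(B)$ typically falls below $\mu$. For example, if $G$ is a single matching of size $k$, any collision among its $2k$ endpoints forces $\mu(B)\le k-1$. Yet $H$ still contains all $k$ edges whenever they occupy distinct bucket pairs. So the quotient graph is the wrong object to analyze.

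What is missing is an analysis that needs separation only \emph{locally}. It should also combine repetitions by taking a maximum matching of the \emph{union} of all sampled edges (still at most $\mu$), not the maximum over repetitions. One way to do this:

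With $b=\Theta(k)$, fix one matched edge or vertex. By a union bound, the event that no other vertex of $S$ shares its bucket(s) has constant probability. Over $O(\log n)$ repetitions it therefore occurs, with high probability, for every one of the $O(k)$ items. Since $S$ is a vertex cover, under this event every edge in the relevant bucket pair touches that item.

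For $(u,v)\in M^*$ with $d_u,d_v=O(k)$, the pair $\{h(u),h(v)\}$ then contains only $O(1)$ edges in expectation. A (near-)uniform $\ell_0$-sample therefore returns $(u,v)$ itself with constant probability.

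Now take a vertex $t$ whose degree exceeds a suitable $\Theta(k)$ threshold. It must lie in $S$. It is never matched in $M^*$ to another such vertex, since that would create an augmenting path of length three. When $t$ is alone in its bucket among $S$, each $S$-free bucket containing a neighbor of $t$ yields a sampled edge at $t$. For suitable constants this gives at least $2\mu$ distinct neighbors of $t$ in the union.

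A greedy exchange then re-matches each such $t$ to a free sampled neighbor while keeping the recovered low-degree edges of $M^*$. This produces a matching of size $\mu$ in the union. An argument of this local kind, not global separation, is what makes $O(k)$ buckets, and hence $\Ot(k^2)$ space, suffice.
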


Given the three estimators, the final output of the algorithm is:
\begin{equation}\label{eq:output}
 M_{\rm est}=\frac{\max\left\{
       \min\{K,\widehat Q_{\rm small}\},\;
       \min\{K^2,\widehat Q_{\rm rec}-\eta K\},\;
       \widehat Q_{\rm term}-\eta K^2
     \right\}}{1-2\eta} \ .
\end{equation}
The role of capping the estimates and/or subtracting terms is to take into account that the accuracy guarantees of some of the estimates do not hold for all values of $\mu$.
 
\begin{proof}[Proof of Theorem~\ref{thm:main}]
Assume that
\eqref{eq:termaccuracy} holds, 
\eqref{eq:smallaccuracy} holds when $\mu \leq K$, and that
\eqref{eq:recaccuracy}  holds when $\mu\le K^2$.
By a union bound, these applicable guarantees hold simultaneously
with high probability. 

For the upper bound, we bound each term as follows:
\begin{align*}
 \min\{K,\widehat Q_{\rm small}\}
  \le \min\{K,\mu\}\le\mu & \le Q \\
   \widehat Q_{\rm term}-\eta K^2 &\le  Q\\
  \min \{ K^2, \widehat Q_{\rm rec}-\eta K\} &\le (1+\eta) Q
\end{align*}
where the last line follows because if $\mu>K^2$, then $K^2<\mu\le Q$ and if
$\mu\le K^2$, then~\eqref{eq:recaccuracy} gives
\[
 \widehat Q_{\rm rec}-\eta K
 \le Q+\eta\max\{K,Q\}-\eta K
 \le(1+\eta)Q.
\]
For the lower bound, consider the three ranges of $\mu$:
  \begin{eqnarray*}
\mu \leq K &  \Rightarrow & \min\{K, \widehat Q_{\rm small}\} = \mu   \\
K\le\mu\le K^2
&\Rightarrow&
\min\{K^2,\widehat Q_{\rm rec}-\eta K\}
\ge \min\{\mu,(1-\eta)Q-\eta K\}
\ge (1-2\eta)\mu \\
\mu\geq K^2 & \Rightarrow &  \widehat Q_{\rm term}-\eta K^2 \geq (1- \eta)Q-\eta \mu \geq  (1- 2\eta)\mu  
\end{eqnarray*}

Using Lemma~\ref{lem:statistic}, we obtain
\[
 \mu\le M_{\rm est}
 \le\frac{1+\eta}{1-2\eta}Q
 \le\frac{1+\eta}{1-2\eta}(\alpha+2)\mu 
 \le (1+\varepsilon) (\alpha+2) \mu.
\]

All three estimators use one pass and $\Ot(K^2)=
\Ot\bigl(\varepsilon^{-4/3}\alpha^{4/3}n^{2/3}\bigr)$ space.
\end{proof}

\section{The Terminal and Recursive Estimators}\label{sec:sampling}

In this section, we design the terminal and recursive estimators, $\widehat Q_{\rm term}$ and  $\widehat Q_{\rm rec}$, and prove Lemma \ref{lem:degree-estimators}.
Both estimators are based on node sub-sampling. We define a sequence of subsets of nodes $V=V_0\supseteq V_1 \supseteq \ldots \supseteq V_L$ using pairwise independent hash functions, where
\begin{equation}\label{eq:depth}
 L=\max\{0,\lceil\log_2(n/K^2)\rceil\}.
\end{equation}
Specifically, let $b_1,\ldots,b_L:V\rightarrow\{0,1\}$ be pairwise-independent
hash functions, chosen independently of one another. Define
\[
 V_0=V,\qquad
 V_{i+1}=\{v\in V_i:b_{i+1}(v)=1\},\qquad
 p_i=2^{-i},\qquad
 F_i=\sum_{v\in V_i}f_v.
\]
In particular, $F_0=F$. 
Our choice of $L$
gives $\E[|V_L|]=np_L\le K^2$. We will compute the exact degrees of all nodes in $V_L$.
 In Section~\ref{sec:terminal}, we discuss how to use the degrees of nodes in $V_L$ to construct the estimator $\widehat Q_{\rm term}$.
In Sections~\ref{sec:discovery} and~\ref{sec:recursive}, we construct $\widehat Q_{\rm rec}$ by  estimating $F$ based on a recursive procedure starting with $F_L$ (which can be computed exactly given the degrees of nodes in $V_L$) and using additional information about the high-degree nodes in each $V_i$.

\subsection{The Terminal Estimator} \label{sec:terminal}

Recall from~\eqref{eq:twoforms} that $Q$ can be written as   $Q=\sum_v \min \{d_v,D\} - m$. This can be estimated via

\begin{equation}\label{eq:terminalestimate}
 \widehat q_{\rm term}
 =p_L^{-1}\sum_{v\in V_L}\min(d_v,D)-m 
\end{equation}
and note that
\begin{equation}\label{eq:terminalvariance}
 \E(\widehat q_{\rm term})=Q,\qquad
 \Var(\widehat q_{\rm term})
 =\frac{1-p_L}{p_L}\sum_v\min(d_v,D)^2
 \le\frac{D^2n(1-p_L)}{p_L}
 \le\frac{8\alpha^2n^2}{K^2}.
\end{equation}
For $L\ge1$, the last inequality uses $p_L>K^2/(2n)$ and $D\le2\alpha$.
For $L=0$, the variance is zero.

\paragraph{Reducing Error Probability.} By an application of the Markov bound and \eqref{eq:terminalvariance}, for every
$\mu$,
\[
 \Pr\left[
   |\widehat q_{\rm term}-Q|>\eta K^2
 \right]
 \le
 \frac{\mathbb E[(\widehat q_{\rm term}-Q)^2]}
      {\eta^2K^4}
 \le
 \frac{8\alpha^2n^2}{\eta^2K^6}
 \le\frac1{10}
\]
by setting the constant $C_0$ in the definition of $K$ sufficiently high. Hence, letting $\widehat Q_{\rm term}$ be  the median of $O(\log n)$ independent copies of $\widehat q_{\rm term}$ ensures that $|\widehat Q_{\rm term}-Q| \leq \eta K^2$ with high probability. The space to compute each $\widehat q_{\rm term}$ is the $\tilde{O}(L)$ space to store  the $b_i$ hash functions and  $\tilde O (|V_L|)$ space required to compute degrees in $V_L$. Recall $\E[|V_L|] \leq K^2$. To ensure a worst-case space bound, we note that if we return an arbitrary value for $\widehat q_{\rm term}$ whenever $|V_L|>10K^2$ then the probability  $|\widehat q_{\rm term}-Q|>\eta K^2$ is at most $1/10+1/10=1/5$ and we still have $|\widehat Q_{\rm term}-Q| \leq \eta K^2$ with high probability. Hence, we can avoid computing the degrees of nodes in $V_L$ when there are many such nodes and the space is $\tilde{O}(K^2)$  in the worst case.

\subsection{Degree Estimation via CountSketch}\label{sec:discovery}
For the recursive estimator, we will use CountSketch to estimate the degrees of nodes in each $V_i$.
For $0\le i<L$, define
\begin{equation}\label{eq:threshold}
 \tau_i:=c_\tau\alpha\max\{1,\sqrt{|V_i|/K^2}\} \qquad \mbox{ where } c_\tau:=32 \ .
\end{equation}

The CountSketch algorithm \cite{CharikarCF04} is designed to estimate the frequency of elements in a data stream. Specifically, we consider a stream whose $t$-th element is $a_t=(i_t,\Delta_t)\in V\times\{-1,1\}$ and for each $v\in V$, define the frequency of $v$ to be $x_v=\sum_{t:i_t=v}\Delta_t$. The CountSketch algorithm uses $O\!\left(w\log(n/\delta)\right)$ counters and returns estimates $\tilde{x}_1,\ldots,\tilde{x}_n$ such that with probability at least $1-\delta$,
\begin{equation}\label{eq:cm}
 |\tilde{x}_v-x_v|\leq\frac{\|\mathbf{x}_{w\textup{-tail}}\|_2}{\sqrt{w}} \qquad \mbox{for all $v\in V$}
\end{equation}
where $\|{\bf x}_{w\textup{-tail}}\|_2$ is the $\ell_2$ norm of the vector of frequencies with the largest $w$ entries (in absolute value) replaced by $0$. The randomness used for CountSketch is independent of the randomness used to define $V_1,V_2,\ldots$.

For each $0\le i<L$, we apply CountSketch to the degrees of nodes in $V_i$ as follows.
\begin{enumerate}
\item For each edge $\{u,v\}$ inserted: If $u\in V_i$, we increment $x_u$ and if $v\in V_i$, we increment $x_v$.
\item For each edge $\{u,v\}$ deleted: If $u\in V_i$, we decrement $x_u$ and if $v\in V_i$, we decrement $x_v$.
\end{enumerate}
Note that the final value of $x_v$ is $d_v$ if $v\in V_i$ and zero otherwise. The next lemma uses properties of bounded arboricity graphs. It immediately implies that:
\begin{equation}\label{eq:tail}
 \|\mathbf{x}_{3\mu\textup{-tail}}\|_2\leq4\alpha\sqrt{|V_i|} \ .
\end{equation}
Note that a similar idea was used in Konrad et al.~\cite{00010ST24}.

\begin{lemma}\label{lem:highdegree}
At most $3\mu$ vertices have degree at least $4\alpha$.
\end{lemma}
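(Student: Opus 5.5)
Let $H=\{v:d_v\ge 4\alpha\}$ and $h=|H|$; the claim is $h\le 3\mu$. The plan is to use a maximum matching $M$, with vertex set $V(M)$ of size $2\mu$, as a vertex cover. Every edge of $G$ has an endpoint in $V(M)$, since otherwise $M$ could be extended. Split $H$ into $H_1=H\cap V(M)$ and $H_2=H\setminus V(M)$. Trivially $|H_1|\le 2\mu$. It therefore suffices to show $|H_2|\le \mu$, or more loosely $|H_2|\le 2\mu$ combined with some slack from $H_1$.

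For $H_2$ I would count edges. Every vertex of $H_2$ is outside $V(M)$, so all its at least $4\alpha$ edges go to $V(M)$. Moreover $H_2$ is an independent set, so these edges are distinct. Now consider the subgraph on $H_2\cup V(M)$, which has $t\le |H_2|+2\mu$ vertices. By the arboricity bound it has at most $\alpha(|H_2|+2\mu-1)$ edges. Comparing with the lower bound gives
\[
4\alpha|H_2|\le \alpha(|H_2|+2\mu),
\]
so $3|H_2|\le 2\mu$, that is $|H_2|\le 2\mu/3$. Combined with $|H_1|\le 2\mu$, this yields $h\le 8\mu/3\le 3\mu$.

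The main obstacle is keeping the edge count disjoint and the vertex set small, so that the arboricity inequality is applied to a subgraph containing all the counted edges. Using the independence of $H_2$ and the vertex-cover property of $V(M)$ handles this. If $\mu=0$ the graph is edgeless and the claim is trivial. The constant $3$ leaves slack, so no refinement of the $H_1$ bound is needed.
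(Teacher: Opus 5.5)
Your proposal is correct and takes essentially the same approach as the paper: use the $2\mu$ matched vertices as a vertex cover, bound the unmatched high-degree vertices by $2\mu/3$ via the arboricity inequality on $E(X,Y)$, and add the $2\mu$ matched vertices to get at most $8\mu/3\le 3\mu$. Your treatment of edge distinctness and of the case $\mu=0$ is fine.
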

\begin{proof}
Fix a maximum matching of size $\mu$. Let $X$ be the endpoints of this matching and let $Y$ be the set of nodes in $V\setminus X$ whose degree is at least $4\alpha$. Since $X$ is a vertex cover, all neighbors of $Y$ are in $X$, so
$4\alpha|Y|\le|E(X,Y)|$. Applying the arboricity bound to the subgraph
induced by $X\cup Y$ gives $|E(X,Y)|\le\alpha(|X|+|Y|)$. Hence
$|Y|\le2\mu/3$ and $|X|+|Y|\le3\mu$.
\end{proof}

Set $w=3K^2$ and $\delta=n^{-10}$. For $v\in V_i$, let
\[
 \widetilde d_{i,v}=\min\{n-1,\max\{0,\tilde{x}_v\}\}
\]
be the CountSketch estimate clipped to $[0,n-1]$. Since $d_v\in[0,n-1]$, clipping cannot increase the estimation error. Hence, combining \eqref{eq:cm} and \eqref{eq:tail} gives, on the assumption $\mu\le K^2$,
\begin{equation}\label{eq:degree-error}
 |\widetilde d_{i,v}-d_v|
 \le4\alpha\sqrt{|V_i|}/K
 \le\tau_i/2 \qquad \mbox{for all $v\in V_i$}
\end{equation}
with probability at least $1-\delta$, conditional on $V_i$.

We say CountSketch is \emph{successful} at level $i$ if~\eqref{eq:degree-error} holds. For the recursive estimator, we need accurate estimates of $f_v$ that are zero whenever $f_v=0$. The direct estimate $\max\{\widetilde d_{i,v}-D,0\}$ may be positive even when $f_v=0$. We therefore subtract the error term $\tau_i/2$ before taking the positive part, defining
\[
 \widetilde f_{i,v}=\max\{\widetilde d_{i,v}-D-\tau_i/2,0\}.
\]
When CountSketch is successful, $\widetilde d_{i,v}-\tau_i/2\in[d_v-\tau_i,d_v]$, and hence
\begin{equation}\label{eq:f-error}
 0\le f_v-\widetilde f_{i,v}\le\tau_i
 \qquad \mbox{for all $v\in V_i$}.
\end{equation}
In particular, $\widetilde f_{i,v}=0$ whenever $f_v=0$.

\subsection{The Recursive Estimator}\label{sec:recursive}

For the recursive estimator our focus is on estimating $F=\sum_v f_v$, where $f_v=\max(d_v-D,0)$.
To this end, we use the recursive-sketching construction of Braverman and
Ostrovsky~\cite{BravermanO13a}.

\paragraph{Recursive Estimation with Correction.}
Recall $F_L$ is computed exactly. The recursive approach will construct an estimate $\widehat F_i$ for $F_i$ in order of decreasing $i$.
Given that $\E[F_{i+1}\mid V_i]=F_i/2$, the natural recursion would estimate $F_i$ by $2\widehat F_{i+1}$.
However, even if $F_{i+1}$ were known exactly, the one-step error of this approach would be
\[
 2F_{i+1}-F_i
 =\sum_{v\in V_i}(2b_{i+1}(v)-1)f_v,
 \qquad
 \Var(2F_{i+1}-F_i\mid V_i)=\sum_{v\in V_i}f_v^2,
\]
where the variance calculation used the fact that $(2b_{i+1}(v)-1)^2=1$ and for $u\neq v$,
\[
 \E[(2b_{i+1}(u)-1)(2b_{i+1}(v)-1)]=0 \ .
\]
Note that a few large $f_v$ values can make this variance large.

To reduce this variance, we use the estimates $\widetilde f_{i,v}$ to replace the contributions of $f_v$ by the residuals $f_v-\widetilde f_{i,v}$. Define the correction term
\begin{equation}\label{eq:correction}
 C_i=\sum_{v\in V_i}a_i(v)\widetilde f_{i,v}
 \qquad \mbox{ where }\qquad a_i(v)=1-2b_{i+1}(v) \ .
\end{equation}
Starting with $\widehat F_L=F_L$, define the estimator recursively by
\[
 \widehat F_i=2\widehat F_{i+1}+C_i.
\]
The pseudo-code for the implementation is in Figure \ref{fig:trial}.

\paragraph{One-level Error.}
Assume $\mu\le K^2$ and let $E_i=2F_{i+1}+C_i-F_i$. Every $v\in V_i$ contributes $-a_i(v)f_v$ to $2F_{i+1}-F_i$ and $a_i(v)\widetilde f_{i,v}$ to $C_i$. Hence,
\begin{equation}\label{eq:actual-error-identity}
 E_i=\sum_{v\in V_i}a_i(v)(\widetilde f_{i,v}-f_v).
\end{equation}
Let $\mathcal R_i$ denote the randomness used by CountSketch at level $i$. Conditioned on $V_i,\mathcal R_i$, the coefficients in~\eqref{eq:actual-error-identity} are fixed and the signs $a_i(v)$ are fair and pairwise independent. Therefore, all cross terms vanish and
\begin{equation}\label{eq:actual-error-identity-start}
 \E[E_i^2\mid V_i,\mathcal R_i]
 =\sum_{v\in V_i}(\widetilde f_{i,v}-f_v)^2.
\end{equation}

Write $h_i=|\{v\in V_i:f_v>0\}|$. When CountSketch is successful at level $i$,~\eqref{eq:f-error} implies that at most $h_i$ terms in~\eqref{eq:actual-error-identity-start} are nonzero, and each is at most $\tau_i^2$. Therefore,
\[
 \E[E_i^2\mid V_i,\mathcal R_i]\le\tau_i^2h_i.
\]
If CountSketch is unsuccessful, clipping ensures that $f_v,\widetilde f_{i,v}\in[0,n-1]$, so~\eqref{eq:actual-error-identity-start} is at most $n^3$. Since this happens with probability at most $\delta$, conditional on $V_i$, we obtain
\begin{equation}\label{eq:actual-error-all}
 \E[E_i^2\mid V_i]\le\tau_i^2h_i+\delta n^3.
\end{equation}
\paragraph{Cumulative Error.}
Subtracting $F_i$ from the recursion gives
$\widehat F_i-F_i=2(\widehat F_{i+1}-F_{i+1})+E_i$.
Unrolling this identity, and using the fact that $\widehat F_L=F_L$, gives
\[
 \widehat F_0-F_0=\sum_{i=0}^{L-1}2^iE_i.
\]
For $L\ge1$, Cauchy--Schwarz gives
\[
 (\widehat F_0-F_0)^2
 \le L\sum_{i=0}^{L-1}4^iE_i^2.
\]
Taking expectations and using~\eqref{eq:actual-error-all} therefore gives
\[
 \E[(\widehat F_0-F_0)^2]
 \le L\sum_{i=0}^{L-1}4^i\E[\tau_i^2h_i]+\delta n^5L^2,
\]
where we used $4^i\le4^{L-1}<(n/K^2)^2\le n^2$ for $i<L$.
Since $\tau_i^2\le c_\tau^2\alpha^2(1+|V_i|/K^2)$, we obtain
\begin{align*}
 4^i\E[\tau_i^2h_i]
 &\le c_\tau^2\alpha^2p_i^{-2}
 \left(\E[h_i]+\frac{\E[|V_i| \cdot h_i]}{K^2}\right)
 =c_\tau^2\alpha^2h_0
 \left(\frac1{p_i}+\frac n{K^2}
              +\frac{1-p_i}{p_iK^2}\right)
\end{align*}
where we used the fact that pairwise-independent sampling gives $\E[h_i]=p_ih_0$ and
\begin{eqnarray*}
 \E[|V_i| \cdot h_i]
 =\sum_{u:f_u>0}\sum_{v\in V}
       \Pr[u\in V_i,\ v\in V_i]
 & =& h_0p_i\bigl((n-1)p_i+1\bigr) \\
 & =& h_0\bigl(np_i^2+p_i(1-p_i)\bigr).
\end{eqnarray*}
The term $p_i(1-p_i)$ accounts for pairs with the same vertex.
For $i<L$, we have $p_i>K^2/n$; since $K\ge1$, each of the three
terms in parentheses above is at most $n/K^2$.
Hence
\begin{equation}\label{eq:actual-total-error}
 \E[(\widehat F_0-F_0)^2]
 \le\frac{3c_\tau^2\alpha^2nh_0L^2}{K^2}+\delta n^5L^2
 \le\frac{(3c_\tau^2+1)\alpha^2nQL^2}{K^2}.
\end{equation}
Here we used $h_0\le Q$, $\delta=n^{-10}$, and $K^2<n$ when $L>0$; we assume the graph is nonempty, so $Q\ge1$. The empty graph is recognized from the edge count. If $L=0$, the estimator is exact.

\paragraph{Reducing Error Probability.} Given an estimate  $\widehat F_0
$ of $F$, we define an estimate for $Q$ as follows:
\[
 \widehat q_{\rm rec}=m-\widehat F_0 \ .
\]
By Markov's inequality and~\eqref{eq:actual-total-error},
when $\mu\le K^2$ we have \[
 \Pr\left[
   |\widehat q_{\rm rec}-Q|>
       \eta\max\{K,Q\}
 \right]
 \le
 \frac{\mathbb E[(\widehat q_{\rm rec}-Q)^2]}
      {\eta^2\max\{K,Q\}^2}                                      
 \le
 \frac{(3c_\tau^2+1) \alpha^2nL^2}{\eta^2K^3}
 \le\frac1{10} \ ,
\]
where we used  $\max\{K,Q\}^2\ge KQ$ and set the constant $C_0$ in the definition of $K$ sufficiently large.

 The error probability can be reduced using the same approach as used for the terminal estimator. Specifically, letting $\widehat Q_{\rm rec}$ be  the median of $O(\log n)$ independent copies of $\widehat q_{\rm rec}$ ensures that with high probability 
\[
 |\widehat Q_{\rm rec}-Q|
 \leq \eta\max\{K,Q\}
 \qquad\text{when }\mu\le K^2 \ ,
\] and this holds even if we let the value of $\widehat q_{\rm rec}$ be arbitrary when $|V_L|>10K^2$. Hence, $\widehat Q_{\rm rec}$ can be estimated using space $O(\log n ) \cdot (L+1) \cdot \tilde{O}(K^2) = \tilde{O}(K^2)$ 
 in the worst case.

\begin{figure}[!t]
\centering
\setlength{\fboxsep}{8pt}
\fbox{\begin{minipage}{\dimexpr\linewidth-2\fboxsep-2\fboxrule\relax}
\small
\begin{enumerate}
\item {\bf Preprocessing:}
\begin{enumerate}
\item Set $L=\max\{0,\lceil\log_2(n/K^2)\rceil\}$.
\item Let $b_1,\ldots,b_L:V\rightarrow\{0,1\}$ be pairwise-independent
hash functions, chosen independently of one another. Define $V_0=V$ and
$V_i=\{v\in V_{i-1}:b_i(v)=1\}$.
\item Compute $|V_i|$ for $0\le i\le L$. If $|V_L|>10K^2$, return
$\widehat q_{\rm term}=0$ and
$\widehat q_{\rm rec}=0$ and stop this trial.
\item For $0\le i<L$, set $a_i(v)=1-2b_{i+1}(v)$ and
$\tau_i=32\alpha\max\{1,\sqrt{|V_i|/K^2}\}$.
\end{enumerate}
\item {\bf During the stream:}
\begin{enumerate}
\item Maintain $m$ and $d_v$ for every $v\in V_L$.
\item For $0\le i<L$, use CountSketch with $w=3K^2$ and $\delta=n^{-10}$ on the degree vector masked to $V_i$, independently of $b_1,\ldots,b_L$.
\end{enumerate}
\item {\bf Postprocessing:}
\begin{enumerate}
\item Compute $F_L$ exactly from the stored degrees.
\item For each $0\le i<L$, enumerate vertices $v\in V_i$, query
CountSketch, and clip each degree estimate $\widetilde d_{i,v}$ to $[0,n-1]$.
Accumulate
\[
 C_i=\sum_{v\in V_i}a_i(v)\max\{\widetilde d_{i,v}-D-\tau_i/2,0\}
\]
during this enumeration.
\item Compute
\[
 \widehat F_0=2^LF_L+\sum_{i=0}^{L-1}2^iC_i.
\]
\item Return the basic estimates
\[
 \widehat q_{\rm term}
 =p_L^{-1}\sum_{v\in V_L}\min(d_v,D)-m,
 \qquad
 \widehat q_{\rm rec}=m-\widehat F_0.
\]
\end{enumerate}
\end{enumerate}
\end{minipage}}
\caption{Computation of $\widehat q_{\rm rec}$ and $\widehat q_{\rm term}$.}\label{fig:trial}
\end{figure}

\subsection*{AI Disclosure} ChatGPT-6 Astra was used in this research and made the connection with the recursive sketching technique. The author takes responsibility for the correctness of the paper. 

\small
\bibliographystyle{plain}
\bibliography{matching}

\end{document}